\newtheorem{theorem}{Theorem}
\newtheorem{lemma}{Lemma}
\begin{document}

% paper title
\title{Lower Bound for the Communication Complexity of the Russian Cards Problem}

%\title{A Combinatorial High Girth Tanner Graph Construction for a 
%       Family of Near-Regular LDPC Codes}

% author names and affiliations
% use a multiple column layout for up to three different
% affiliations

% AUTHOR BEGINS HERE 

\author{\authorblockN{Aiswarya Cyriac, K. Murali Krishnan, }
  \authorblockA{Department of Computer Science and Engineering\\
                National Institute of Technology, Calicut 673601,  India.\\
		Email: aiswaryanitc@gmail.com, kmuralinitc@gmail.com}}

%\author{\authorblockN{
%\authorblockA{School of Electrical and\\Computer Engineering\\
%Georgia Institute of Technology\\
%Atlanta, Georgia 30332--0250\\
%Email: mshell@ece.gatech.edu}
%\and
%\authorblockN{Homer Simpson}
%\authorblockA{Twentieth Century Fox\\
%Springfield, USA\\
%Email: homer@thesimpsons.com}
%\and
%\authorblockN{James Kirk\\ and Montgomery Scott}
%\authorblockA{Starfleet Academy\\
%San Francisco, California 96678-2391\\
%Telephone: (800) 555--1212\\
%Fax: (888) 555--1212}}

% for over three affiliations, or if they all won't fit within the width
% of the page, use this alternative format:
% 

%\author{\authorblockN{Michael Shell\authorrefmark{1},
%Homer Simpson\authorrefmark{2},
%James Kirk\authorrefmark{3}, 
%Montgomery Scott\authorrefmark{3} and
%Eldon Tyrell\authorrefmark{4}}
%\authorblockA{\authorrefmark{1}School of Electrical and Computer Engineering\\
%Georgia Institute of Technology,
%Atlanta, Georgia 30332--0250\\ Email: mshell@ece.gatech.edu}
%\authorblockA{\authorrefmark{2}Twentieth Century Fox, Springfield, USA\\
%Email: homer@thesimpsons.com}
%\authorblockA{\authorrefmark{3}Starfleet Academy, San Francisco, California 96678-2391\\
%Telephone: (800) 555--1212, Fax: (888) 555--1212}
%\authorblockA{\authorrefmark{4}Tyrell Inc., 123 Replicant Street, Los Angeles, California 90210--4321}}

\maketitle

\begin{abstract}
In this paper it is shown that no public announcement scheme that can be modeled in Dynamic Epistemic Logic (DEL) can solve the Russian Cards Problem (RCP) in one announcement. Since DEL is a general model for any public announcement scheme \cite{dit4,baltag,benthem1lonely,lutz,dit5}  we conclude that there exist no single announcement solution to the RCP. The proof demonstrates the utility of DEL in proving lower bounds for communication protocols. It is also shown that a general version of RCP has no two announcement solution when the adversary has sufficiently large number of cards.
\vspace{0.5cm}

\textbf{Key words:} Russian cards problem, Dynamic Epistemic Logic, Communication complexity, Lower bound.
%Performance simulations of iterative decoding algorithm on 
%codes designed using the method are reported.
\end{abstract}

\IEEEpeerreviewmaketitle

\section{Introduction}
\label{intro}

In the Russian cards problem  (RCP), there are three players and seven cards. The cards are randomly distributed among the players such that two players get three cards each and the third player gets one card. The problem is to find a sequence of public announcements by which players with three cards each are able to acquire complete information about all the cards,  without the third player knowing about any of their cards.

The solution to the problem will imply a method to communicate information among parties in a distributed computing setting securely without using any encryption \cite{fischer,jam,koizumi}. The analogy is that, the communicating agents and adversaries are modeled as players and the information to be communicated as the ownership of cards. It is generally believed that the above game gives unconditional security \cite{dit4,fischer,jam,koizumi}.

 Various solutions to the above problem can be found in the literature \cite{dit-sol1,dit-sol2,makar-sol1}. They all require at least two public announcements. Here we address the problem of formally establishing that no public announcement scheme can solve the problem in fewer announcements.  The framework of
Dynamic Epistemic Logic (DEL) is used to establish the lower bound. Similar bounds using other models for related problems can be found in \cite{albert,ambainis,cai,fischer,franklin}.

The following sections briefly discuss dynamic epistemic logic,  modeling of RCP in DEL and a proof for the lower bound. Finally a generalization of the RCP is presented and it is shown that two announcements are not sufficient to solve the general case when the adversary has sufficiently large number of cards.

\section{Russian Cards Problem (RCP)}

The problem was posed in 2000 \cite{dit4} as the following: 

\hspace*{.5 cm}\textit{From a pack of seven known cards two players each draw three cards and the third player gets the remaining card. How can the players with three cards openly(publicly) inform each other about their cards, without the third player learning from any of their cards who holds it?}

Let us call the cards 0, 1, ..., 6. The players are Anne, Bill and Cath. Anne and Bill have three cards each and Cath has one. No secret communication is possible. Only announcements are allowed. The announcements are assumed to be truthful and public. Through a sequence of such announcements Anne and Bill have to learn the actual deal of the cards. i.e., for each card from the above pack, Anne and Bill should be able to say to whom that card belongs. Also for any card from the pack other than the one Cath is holding, she should not be able to say to whom that card belongs.

Various solutions to the above problem can be seen in the literature \cite{dit-sol1,dit-sol2,makar-sol1}. All these solutions require two announcements.

\section{Dynamic Epistemic Logic ( DEL)}

 In this paper we express the RCP in the framework of Dynamic Epistemic Logic (DEL). This section briefly presents the syntax and semantics of Dynamic Epistemic Logic. More detailed discussion of the DEL, examples and its applications can be found in \cite{baltag,benthem,dru,gerbrandy,gerbrandy2}.

 \textit{Kripke models} and \textit{action models} are semantical structures of dynamic epistemic logic. Given a set of agents (players) and basic propositions  a Kripke model consists of the set of possible states and \textit{accessibility}  (or \textit{indistinguishability}) relation between the states for every agent. The knowledge of the players about the state of the game in imperfect information games\footnote{In imperfect information games, players do not have complete information about other players' moves. } can be modeled by viewing game states as Kripke states and players as agents.

 Action models model the actions of the players that will alter the knowledge of the players. Given an initial Kripke model modeling the knowledge of the players, the action models can be sequentially executed in the Kripke model, resulting in a new Kripke model that models the knowledge of the players after the action .

\paragraph*{Epistemic Language}
Epistemic logic can be used to model knowledge in games of imperfect information \cite{baltag2,benthem,benthem1lonely,benthem3,hintika}.

 Let $N$ be a finite set of agents\footnote{Players are modeled as agents. Agents are assumed to be perfect logicians, i.e. the agents know all the consequences of their knowledge.} and $P$ be a finite set of propositional atoms. The Epistemic language $L_{P,N}$ is the smallest closed set for which the following holds:
\begin{itemize}
 \item $p \in P \Rightarrow p \in L_{P,N} $
\item $\phi,\psi \in L_{P,N} \Rightarrow \neg \phi, (\phi \wedge \psi)  \in L_{P,N} $
\item $\phi \in L_{P,N}$ and $n \in N \Rightarrow K_{n}\phi \in L_{P,N}$
\end{itemize}

The sentence $K_{1}\phi$ is read as: agent 1 knows $\phi$ .  $(\phi \vee \psi), \phi \rightarrow \psi$, and $ \phi \leftrightarrow \psi  $ and are abbreviations for $\neg (\neg \phi \wedge \neg \psi ) , \neg \phi \vee \psi$,  and $ ( \phi \wedge \psi) \vee ( \neg \phi \wedge \neg \psi) $ respectively. The notation $\top$ stands for $\neg ( p \wedge \neg p)$ for some $p \in P$.\\

Let a finite set of agents $N$ and a finite set of propositional atoms $P$ be given. A Kripke model \cite{voorbrak,benthem,baltag2,dru} is a tuple $(W,R,V)$ where:
\begin{itemize}
 \item The set $W$ is nonempty set of states $\{w_{1},\ldots w_{|W|}\}$
 \item The accessibility function $R: N \rightarrow 2^{W\times W}$  assigns for each agent $n \in N $ a set  of ordered pairs of states. $\forall n \in N$, $ R(n) \subseteq W\times W$ is an equivalence relation.
\item The valuation function  $V:W\rightarrow2^{P}$ assign to each state a set of propositional atoms. $\forall w \in W$, $ V(w) \subseteq P $.\\
\end{itemize}

$(w,w')\in R(n)$ is interpreted as  state $w'$ is accessible (or indistinguishable) from state $w$ for the agent $n$.
The set of propositional atoms assigned to a state by $V$ is the atoms which hold in that state. 

A Kripke world is a pair  consisting of a Kripke model $M=(W,R,V)$ and a state $w\in W$ and is denoted by $(M,w)$.

\label{del}
 \paragraph*{Semantics of Epistemic Logic:} Let a Kripke model $M=(W,R,V)$  and the epistemic language $L_{P,N}$ be given.
\begin{itemize}
 \item $  M,w \models p \Leftrightarrow p \in V(w)$
\item $  M,w \models \neg \phi \Leftrightarrow M,w \not\models \phi $
\item $  M,w \models  \phi \wedge \phi \Leftrightarrow M,w \models \phi $ and $ M,w \models \psi $
\item $  M,w \models K_{n} \phi \Leftrightarrow $ For all $w'$ such that $(w,w') \in R(n)$, $M,w' \models \phi$
\end{itemize}

A Player $x$ knows the fact $\phi$ in the state $w$ only if $\phi$ holds in all the states indistinguishable from $w$. Also if $\phi$ is true in all the states indistinguishable from $w$ for Player $x$, she can infer  $\phi$.

 The \textit{action models} are used to update Kripke Models. An action model consist of a set of actions, an accessibility       (indistinguishability) relation between the actions for every agent, and a precondition function for each action.

 Let a set of agents $N$ and the epistemic language $L_{P,N}$ be given. An Action model $\mu$ is a tuple $(A,R^{*},\Pi)$:
\begin{itemize}
 \item The set $A$ is the nonempty set of actions $\{a_{1},\ldots ,a_{|A|}\}$
\item The accessibility function $R^{*}: N\rightarrow 2^{A \times A}$ is a function which assigns to each agent a set of ordered pairs of actions. $\forall n\in N$, $R^{*}(n) \subseteq A\times A$ is an equivalence relation. 
\item The precondition function $\Pi : A \rightarrow L_{P,N}$ assigns to every action a precondition. $\forall a \in A$, $ \Pi(a) \in L_{P,N}$
\end{itemize}
\paragraph*{Execution} Action models are used to update Kripke model. Thus an action model is an operator on a Kripke Model. The execution of an action in a state results in a new state if and only if the precondition of the action holds in that state. 

\label{action}
Let a Kripke model $M=(W,R,V)$ and an action model $\mu = (A,R^{*},\Pi)$ be given. The execution of action model $\mu$ in Kripke model $M$ results in a Kripke model denoted by $M \otimes \mu$. $M \otimes \mu = (W',R',V') $ such that:
\begin{itemize}
\item The set of worlds  $W' = \{ ( w,a ) \in W \times A$ $ | $ $M,w \models \Pi( a)\}$
\item For every Player $n \in N$, $((w,a),(w',a')) \in R'(n)$ iff $(w,w') \in R(n)$ and $ (a,a') \in R^{*}(n)$
\item The valuation function $V'$ is such that $V'(w,a) = V(w)$.
\end{itemize}

States at any particular point carry a tag of all preceding actions. The state $(w,a)$ will exist in the final Kripke model only if the precondition of the action $a$ is satisfied by $w$ in the initial model. Two states $(w,a)$ and $(w',a')$ are indistinguishable for Player $x$ in the new Kripke model if and only if the states $w$ and $w'$ were indistinguishable for the Player $x$ in the initial Kripke Model and the actions $a$ and $a'$ were indistinguishable in the action model $\mu$.

 The knowledge of the players about the state of the game at the beginning of the game is modeled by a Kripke model. The knowledge actions which occur during the game are modeled by action models. The knowledge development is modeled by sequential execution of the action model in the Kripke model, resulting in a  new up-to-date Kripke model modeling the knowledge of the players after the knowledge actions. 

\section{Problem Modeling}

 Given the set of players and the basic propositions, the Russian cards problem (RCP) can be modeled in Dynamic Epistemic Logic  (DEL). Let $U = \{ 0, 1, 2, 3, 4, 5, 6 \}$ be the set of cards and  $N = \lbrace a, b, c\rbrace$         (representing Anne, Bill and Cath) be the set of players. The basic propositions are `card 0 is with Anne,' `card 3 is with Bill' and so on. If $i_{x}$  denotes `card $i$ is with $x$' then the set of basic propositions $P = \{ i_{x} \mid x \in N ,\textnormal{  } i \in  U \}$. 

Initially Player $a$ and Player $b$ have three cards each and Player $c$ has one card. The Kripke model for the initial game state is given by $ M = \langle W, R, V\rangle$ 
where,\\
$W = \{(A,B,C) | \textnormal{  }|A|=|B|=3,| C|=1, A\cup B\cup C = U \}$\\
$R(a) = \{ ((A,B,C),(A',B',C'))\mid A = A'\}$\\
$R(b) = \{ ((A,B,C),(A',B',C'))\mid  B = B'\}$\\
$R(c) = \{ ((A,B,C),(A',B',C'))\mid  C = C'\}$\\
$V\big((A,B,C)\big) = \{ i_{a}\mid i\in A\} \cup \{ i_{b}\mid i\in B\} \cup \{ i_{c}\mid i\in C\}$

In any state $w= (A,B,C)$ the set of cards with Player $a$ is $A$, the set of cards  with Player $b$ is $B$ and that with Player $c$ is $C$. $R(x), x \in N$ contains the state pairs that are indistinguishable for Player $x$. The definition of $R(x)$ follows from the fact that initially Player $x$ knows only the cards she is holding. Hence all the states in which her hand of cards is the same will be indistinguishable for her.

For each $A \subseteq U, |A| = 3$, let $T_{A} = \{ (A',B',C'): A' = A$,  $| B'| = 3$, $| C'| = 1 \textnormal{ and } A'\cup B'\cup C' = U\}$.Similarly for each $B \subseteq U, |B| = 3$, let $S_{B} = \{ (A',B',C'): B' = B$, $|A'|  = 3$, $| C'| = 1 \textnormal{ and } A'\cup B'\cup C' = U\}$ and for each $C \subseteq U, |C| = 1$, let $Q_{C} = \{ (A',B',C'): C' = C$, $|A'|  = | B'| = 3 \textnormal{ and } A'\cup B'\cup C' = U\}$.

Hence we have:\\
$R(a) = \bigcup_{A\subseteq U, |A| = 3} T_{A}\times T_{A}$ and  $A\neq A' \Rightarrow T_{A} \cap T_{A'} = \emptyset$\\
$R(b) = \bigcup_{B\subseteq U, |b| = 3} S_{B}\times S_{B}$ and  $B\neq B' \Rightarrow S_{B} \cap S_{B'} = \emptyset$\\
$R(c) = \bigcup_{C\subseteq U, |C| = 1} Q_{C}\times Q_{C}$ and  $C\neq C' \Rightarrow Q_{C} \cap Q_{C'} = \emptyset$\\

\label{components}
 $R(a)$ is a partition of $W$ and $T_{A}$ will be called a {\em component} of $R(a)$. Here $W = \bigcup_{A \subseteq U, |A| = 3} T_{A}$ and the union is disjoint. $R(a)$ has  $\binom{7}{3} = 35$ components  each having $\binom{4}{3} \times \binom{1}{1} = 4$ states. All four states in any component $T_{A}$  are indistinguishable for Player $a$ by the definition of $R(a)$.

 Similarly $R(b)$ is a partition of $W$ and $S_{B}$ is called a component of $R(b)$. $W = \bigcup_{B\subseteq U, |B| = 3} S_{B}$ and the union is disjoint. $R(b)$ has $\binom{7}{3} = 35$ components  each having $\binom{4}{3} \times \binom{1}{1} = 4$ states  and all four states in any component $S_{B}$ are indistinguishable for Player $b$.

 $R(c)$ is a partition of $W$ and $Q_{C}$ is called a component of $R(c)$.  $W = \bigcup_{C \subseteq U, |C| = 1} Q_{C}$ and the union is disjoint. $R(c)$ has  $\binom{7}{1} = 7$ components  each having $\binom{6}{3} \times \binom{3}{3} = 20$ states. The twenty states in a component $Q_{C}$ are indistinguishable for Player $c$.

 As an example for $A = \{0,1,2\} $, $T_{A}=\{(\{0,1,2\}$, $ \{3,4,5\}$, $ \{6\})$, $(\{0,1,2\}$, $\{3,4,6\}$, $\{5\})$,  $(\{0,1,2\} \{3,5,6\}$, $\{4\})$, $(\{0,1,2\}$, $\{4,5,6\}$, $\{3\})\}$. Player $a$ cannot distinguish between these four states because in all the four states Player $a$'s hand is $\{0,1,2\}$. 

Without loss of generality let us assume that Player $a$ is having cards $\{0,1,2\}$, Player $b$ is having $\{3,4,5\}$ and Player $c$ is having $\{6\}$ initially. We denote this state by $w^{*}$. Thus $w^{*}=(\{0,1,2\}$,  $\{3,4,5\}$, $ \{6\})$. Suppose that a single  announcement scheme solves the RCP. In the final model we claim that there will be at least one component $T_{A}$\footnote{Let the initial state be $M = (W,R,V)$ and the action model be $\mu = (A,R^{*},\pi)$. The states in the final model $M \otimes \mu $ will be a subset of $W \times A$. But for notational convenience, we ignore the action-tags in the states. Hence the  set of states in the final model is seen as a subset of $W$.} and $S_{B}$ of the partitions generated by $R(a)$ and $R(b)$ respectively such that $T_{A}= S_{B} = \{w^{*}\}$. This is because if one more state was present in the component,  the Players cannot distinguish between those states. Also there will be at least one component $Q_{C}$ of the partition generated by $R(c)$  such that $w^{*} \in Q_{C} $ and $|Q_{C}| > 1 $. Otherwise Player $c$ will be able to find out the actual state. we present the claim formally:
\begin{lemma}
\label{claim1}
 Assume that RCP is solved in Kripke model $M= (W,R,V)$ and for all $A$, $B$ and $C$ such that $ |A|=|B|=3$ and $ |C|=1$, $T_{A}, S_{B}$ and $Q_{C}$ are components of $R(a), R(b)$ and $R(c)$ respectively. Then the following statements\footnote{These statements are necessary but not sufficient.} hold:
\begin{enumerate}
 \item $\exists A, \exists B$ such that $T_{A}= S_{B} = \{w^{*}\}$\footnote{Actually it should be $T_{A}= S_{B} = \{(w^{*},t)\}$, where $t$ is the list of actions that led to the final state, but for the ease of writing we have omitted the action tag list $t$. }
\item $\forall C $, $w^{*} \in Q_{C} \Rightarrow |Q_{C}| > 1 $.
\end{enumerate}
\end{lemma}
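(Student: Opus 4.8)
The plan is to unpack what it means for the RCP to be "solved" in the final Kripke model $M=(W,R,V)$ into precise epistemic statements, and then show each of the two statements follows directly. By the definition of the RCP, after the announcement scheme Anne and Bill each know the actual deal: at the true state $w^{*}$ we must have $M,w^{*}\models K_{a}\delta$ and $M,w^{*}\models K_{b}\delta$, where $\delta$ is the formula $\bigwedge_{i\in\{0,1,2\}} i_{a}\wedge\bigwedge_{i\in\{3,4,5\}} i_{b}\wedge 6_{c}$ that describes $w^{*}$ completely. Likewise, the security requirement says that for every card $i\neq 6$ and every player $x$, $M,w^{*}\not\models K_{c}(i_{x})$; in particular Cath does not know who holds card $0$.

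For statement (1), I would argue as follows. Let $T_{A}$ be the component of $R(a)$ containing $w^{*}$ (it exists because $R(a)$ partitions $W$, as established in the problem modeling, and $w^{*}\in W$); similarly let $S_{B}$ be the component of $R(b)$ containing $w^{*}$. By the semantics of $K_{a}$, $M,w^{*}\models K_{a}\delta$ means $\delta$ holds at every state accessible from $w^{*}$ for Anne, i.e. at every state of $T_{A}$. But $\delta$ pins down the deal uniquely — the only state of $W$ (ignoring action tags, per the paper's convention) at which $\delta$ holds is $w^{*}$ itself — so every state of $T_{A}$ equals $w^{*}$, hence $T_{A}=\{w^{*}\}$. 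The identical argument with $K_{b}$ gives $S_{B}=\{w^{*}\}$, and since $A$ and $B$ are just the hands of Anne and Bill at $w^{*}$ (namely $\{0,1,2\}$ and $\{3,4,5\}$), these are the witnessing components, establishing $\exists A\,\exists B:\; T_{A}=S_{B}=\{w^{*}\}$.

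For statement (2), suppose $C$ is such that $w^{*}\in Q_{C}$; since $R(c)$ partitions $W$ there is exactly one such $C$, and it must be $C=\{6\}$, Cath's actual hand. Assume for contradiction that $|Q_{C}|=1$, i.e. $Q_{C}=\{w^{*}\}$. Then the only state accessible from $w^{*}$ for Cath is $w^{*}$ itself, so for every atom $p$ true at $w^{*}$ we get $M,w^{*}\models K_{c}p$; in particular $M,w^{*}\models K_{c}(0_{a})$, i.e. Cath knows Anne holds card $0$. This contradicts the security requirement of the RCP, which forbids Cath from knowing the owner of any card other than her own. Hence $|Q_{C}|>1$.

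There is no real obstacle here; the only thing to be careful about is the bookkeeping around action tags — strictly the states of $M$ carry a history list $t$, so $\delta$ holds at $(w^{*},t)$ for the relevant $t$ and the uniqueness argument must be read at the level of the underlying deals, exactly as the paper's footnotes flag. Once one adopts the stated convention of suppressing the tag, both statements reduce to one-line applications of the definition of $K_{n}$ and of what "RCP is solved" means. The slight subtlety worth stating explicitly is \emph{why} each of $T_{A},S_{B},Q_{C}$ with the required property exists at all — that is just because each of $R(a),R(b),R(c)$ is a partition of $W$ and $w^{*}\in W$, so $w^{*}$ lies in exactly one component of each.
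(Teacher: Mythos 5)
Your proposal is correct and follows essentially the same route as the paper: both arguments reduce statement (1) to the observation that any second state in Anne's (resp.\ Bill's) component of $w^{*}$ would be indistinguishable from $w^{*}$ and hence contradict Anne (resp.\ Bill) knowing the deal, and both derive statement (2) from the fact that a singleton component for Cath would let her determine the actual state. Your version merely makes the appeal to the $K_{n}$ semantics slightly more explicit (via the formula $\delta$), which is a presentational rather than substantive difference.
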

\begin{proof}
Let $w^{*} \in T_{A}$. If possible let $T_{A}$ contain another state --- say $w_{1}$ such that $w^{*} \neq w_{1}$. Let $w^{*} = (A, B^{*}, C^{*})$ and $w^{1}= (A,B^{1}, C^{1})$. $B^{*} \neq B^{1}$ and $C^{*} \neq C^{1}$, otherwise $w^{*} = w^{1} $. Player $a$
cannot distinguish whether Player $b$ is having $B^{*}$ or $B^{1}$. Also she cannot distinguish  whether Player $c$ is having $C^{*}$ or $C^{1}$. This means the RCP is not yet solved contradicting the hypothesis. Therefore $T_{A}=  \{w^{*}\}$. Similar argument shows $S_{B} = \{w^{*}\}$.

$\forall C $, $w^{*} \in Q_{C} \Rightarrow |Q_{C}| > 1 $. If $|Q_{C}| = 1 $, then  $c$ will understand which state she is in.  
\end{proof}

There does not exist even a single card that belongs to Player $a$ in all the states of $  Q_{C}$. If such a card exists, say 4, Player $c$ will understand that 4 is with Player $a$. (Recall the semantics for $K_{n}\phi$ is discussed in section \ref{del} ). Similarly, there does not exist even a single card that belongs to Player $b$ in all the states of $ Q_{C}$.

The above claim is tuned to our requirement of solving the RCP in one announcement. It is easily seen that the claim holds in any final model reached by \textit{any} sequence of announcements.
\section{Lower Bound for RCP}

\begin{theorem} \label{no1RCP}
 There exist no single announcement solution to the RCP.
\end{theorem}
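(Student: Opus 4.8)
The plan is to argue by contradiction and reduce to Lemma~\ref{claim1}. Suppose some single announcement solves the RCP. In the RCP the announcement is made by one of the two players holding three cards, so by the symmetry of the problem under interchanging Anne and Bill I may take Anne to be the announcer. Let the actual deal be $w^{*}=(\{0,1,2\},\{3,4,5\},\{6\})$; its component in $R(a)$ in the initial model $M$ is $T_{\{0,1,2\}}$, which consists of the four deals $w^{*}$, $(\{0,1,2\},\{3,4,6\},\{5\})$, $(\{0,1,2\},\{3,5,6\},\{4\})$ and $(\{0,1,2\},\{4,5,6\},\{3\})$.

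The substantive step is the following. Since there is no prior announcement, Anne's information is identical in all four of these deals --- in each she knows only that her hand is $\{0,1,2\}$ --- so she makes exactly the same announcement in all of them, and, announcements being truthful, the announced formula $\phi$ must hold at every state of $T_{\{0,1,2\}}$. Modelling the announcement by an action model $\mu=(A,R^{*},\Pi)$ executed on $M$ and letting $\alpha$ be the action Anne performs at $w^{*}$, this says $M,w\models\Pi(\alpha)$ for every $w\in T_{\{0,1,2\}}$, so by the definition of $M\otimes\mu$ each state $(w,\alpha)$ with $w\in T_{\{0,1,2\}}$ belongs to the final model. From here the conclusion is mechanical: any two such $w,w'$ satisfy $(w,w')\in R(a)$ and $(\alpha,\alpha)\in R^{*}(a)$, so all four survivors $(w,\alpha)$ lie in one and the same component of $R(a)$ in $M\otimes\mu$, whence Anne cannot tell the final state carrying the deal $w^{*}$ from the one carrying $(\{0,1,2\},\{3,4,6\},\{5\})$ and so does not know the actual deal. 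Since $w^{*}$ lies in exactly one component of $R(a)$ in the final model and that component has just been shown to contain at least four states, this contradicts the first statement of Lemma~\ref{claim1}. The case where Bill is the announcer is entirely symmetric, with $S_{\{3,4,5\}}$ and $R(b)$ replacing $T_{\{0,1,2\}}$ and $R(a)$; hence no single announcement can solve the RCP.

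The step I expect to need the most care is the claim that $\phi$ is invariant over $T_{\{0,1,2\}}$, that is, that Anne's lone announcement cannot encode information about cards she does not hold, and this is exactly where the single-announcement hypothesis is used. For example $6_{c}$ (card $6$ is with Cath) is true at $w^{*}$ and, if Anne could announce it, it would collapse $T_{\{0,1,2\}}$ to $\{w^{*}\}$ in the final model; but $6_{c}$ is not something Anne can truthfully announce as part of a protocol, since it is false at the deals $(\{0,1,2\},\{3,4,6\},\{5\})$, $(\{0,1,2\},\{3,5,6\},\{4\})$ and $(\{0,1,2\},\{4,5,6\},\{3\})$, which Anne cannot distinguish from $w^{*}$. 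This also explains why Lemma~\ref{claim1} on its own does not give the bound --- the excerpt notes that it holds after two-announcement solutions as well --- the missing ingredient being precisely that an announcer gains no information from its own announcement.
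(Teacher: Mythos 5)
Your proposal is correct and follows essentially the same route as the paper: determinism of the announcer over her $R(a)$-component, the product-update rule forcing all four states $(w,\alpha)$ into a single component of $R(a)$ in the final model, and a contradiction with the first part of Lemma~\ref{claim1}. The only difference is cosmetic --- you make explicit (via truthfulness of the precondition $\Pi(\alpha)$) that all four states survive into $M\otimes\mu$, a point the paper leaves implicit.
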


\begin{proof} 
For the sake of contradiction assume there exists a single announcement solution to RCP. Without loss of generality it can be assumed that Player $a$ is making the announcement. Let the action model for the announcement be $\mu=(A,R^{*},\Pi)$.

 In section \ref{del} we have seen that $R(a)$ partitions $W$ into 35 components. Each component will have 4 states . Suppose $w_{1},w_{2},w_{3}$ and $w_{4}$ belong to one component --- say $T_{A}$. Let $w_{1}$ be the actual state. Since Player $a$ will be deterministically making the announcement, she will make the same announcement, say $a_{1}$ for all the elements in $T_{A}$.

In section \ref{action} we have seen that the states $(w_{1},a_{1})$ and $(w_{2},a_{2})$  will be be indistinguishable for Player $a$ if $(w_{1},w_{2})\in R(a)$ and $(a_{1},a_{2})\in R^{*}(a)$. So in the final model $(w_{1},a_{1})$, $(w_{2},a_{1})$,  $(w_{3},a_{1})$ and $(w_{4},a_{1})$ will belong to the same component of $R(a)$. This contradicts Claim \ref{claim1}. Hence there cannot be a single announcement solution to the RCP. 
\end{proof} 
\section{A Generalization}

In this section we will consider a natural generalization of the RCP in which Anne and Bill are holding $k$  cards each and Cath is holding $l$ cards from a pack of $2k+l$ cards. We denote this version of the RCP as RCP($k;l$). Hence the original RCP discussed before is RCP($3;1$) in the new notation.

It can be easily seen that for any $k \geq 1 $ and $l \geq 1$ there does not exist a one announcement solution for RCP($k;l$) as the Theorem \ref{no1RCP} and the proof extends to this case as well . We will now examine the impossibility of a two announcement  solution for RCP($k;l$) using similar strategies.

The set of cards $U = \{ 1, 2, \dots , 2k+l\}$. The initial Kripke model $ M = \langle W, R, V\rangle$ where,\\
$W = \{(A,B,C) | \textnormal{  }|A|=|B|=k,| C|=l, A\cup B\cup C = U \}$\\
$R(a) = \{ ((A,B,C),(A',B',C'))\mid A = A'\}$\\
$R(b) = \{ ((A,B,C),(A',B',C'))\mid  B = B'\}$\\
$R(c) = \{ ((A,B,C),(A',B',C'))\mid  C = C'\}$\\
$V\big((A,B,C)\big) = \{ i_{a}\mid i\in A\} \cup \{ i_{b}\mid i\in B\} \cup \{ i_{c}\mid i\in C\}$

$T_{A},S_{B}$ and $Q_{C}$ are also defined similarly as in Section \ref{components}. It follows that $R(a)$ and $R(b)$ will have $\binom{2k+l}{k}$ components each with $\binom{k+l}{k}\times \binom{l}{l} = \binom{k+l}{k}$ elements each.

Suppose Player $a$ is making an announcement $\alpha$ for a set of components $\mathcal{T}_{\alpha}$. Since Player $c$ should not learn about a single card other than his own hand, we get the following lemma.
\begin{lemma}\label{cond1}
\begin{equation}  
\bigcup_{T_{A}\in \mathcal{T}_{\alpha}}A = U
\end{equation}
\begin{equation}  
\bigcap_{T_{A}\in \mathcal{T}_{\alpha}}A = \emptyset
\end{equation}
\end{lemma}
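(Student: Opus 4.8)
The plan is to show that each of the two displayed equations is forced by the defining property of a solution, namely that Player $c$ never comes to know the owner of any card outside her own hand. Write $\mathcal{A}_\alpha=\{A : T_A\in\mathcal{T}_\alpha\}$ for the family of Anne‑hands consistent with the announcement $\alpha$. Since $\alpha$ is actually made, $\mathcal{A}_\alpha\neq\emptyset$, and by definition Anne makes the announcement $\alpha$ in every deal in which her hand belongs to $\mathcal{A}_\alpha$.

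First I would record a small combinatorial fact: for any $A_0\in\mathcal{A}_\alpha$ and any card $j\notin A_0$ there is a legal deal $(A_0,B_0,C_0)$ in which Anne holds $A_0$ and Player $c$'s hand $C_0$ avoids $j$. Indeed $|U\setminus A_0|=k+l$, and we must pick $l$ cards of $U\setminus A_0$ other than $j$, for which $k+l-1\ge l$ candidates are available because $k\ge 1$. In any such deal Anne truthfully announces $\alpha$, and after the announcement the states Player $c$ considers possible are exactly those of the form $(A,\,U\setminus(A\cup C_0),\,C_0)$ with $A\in\mathcal{A}_\alpha$ and $A\cap C_0=\emptyset$.

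Now for equation (1), suppose for contradiction that some card $j$ lies in no member of $\mathcal{A}_\alpha$. Choose $A_0\in\mathcal{A}_\alpha$ and a deal $(A_0,B_0,C_0)$ as above with $j\notin C_0$. In every world Player $c$ considers possible after $\alpha$ we have $j\notin A$ and $j\notin C_0$, hence $j\in B$; so Player $c$ already knows $j_{b}$. Since every subsequent announcement is a truthful public announcement and therefore only removes worlds, she still knows $j_{b}$ at the end of the protocol, contradicting safety. Thus every card lies in some member of $\mathcal{A}_\alpha$, which is equation (1). Equation (2) is the mirror image: if some card $j$ lies in every member of $\mathcal{A}_\alpha$, pick $A_0\in\mathcal{A}_\alpha$ (so $j\in A_0$) and any legal $C_0\subseteq U\setminus A_0$ with $|C_0|=l$, which exists since $k+l\ge l$; then $j\notin C_0$ automatically, and after $\alpha$ every world Player $c$ considers possible has $j\in A$, so she knows $j_{a}$ — again contradicting safety.

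The only point needing care is the bookkeeping about what $\alpha$ conveys to Player $c$ — that her post‑announcement possibility set is indexed by $\{A\in\mathcal{A}_\alpha : A\cap C_0=\emptyset\}$ — together with the observation that knowledge is monotone under the later (truthful, public) announcements, so a leak caused by $\alpha$ is fatal irrespective of whether $\alpha$ is the first or the second announcement. Neither is a genuine obstacle; the combinatorial count uses only $k\ge 1$ and $l\ge 1$.
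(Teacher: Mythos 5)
Your proof is correct and follows essentially the same route as the paper's: both derive each equation by contradiction from the requirement that Player $c$ must not learn the owner of any card outside her hand (a card missing from every $A\in\mathcal{A}_\alpha$ would be known to be Bill's; a card common to all of them would be known to be Anne's). Your version is in fact somewhat more careful than the paper's, since you explicitly exhibit a consistent deal in which the leaked card lies outside Player $c$'s hand and note that later truthful announcements cannot undo the leak.
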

\begin{proof}
  Assume that this was not the case. i.e.,
$\bigcup_{T_{A}\in \mathcal{T}_{\alpha}}A = Q \mbox{     ,      } Q \subset U$.\\
Player $c$ can infer that the cards in $U\setminus Q$ are not with Player $a$. Hence $U\setminus Q$ has to be $\emptyset$.

Similarly it can be seen that
\label{cond2}
$ \bigcap_{T_{A}\in \mathcal{T}_{\alpha}}A = \emptyset $ since if $\bigcap_{T_{A}\in \mathcal{T}_{\alpha}}A = Q $\mbox{ ,   $   Q\subseteq U $,  } $ Q \neq \emptyset$ , then Player $c$ can infer that the set of cards $Q$ is with Player $a$. 
\end{proof}
Before proving the impossibility of a two announcement  solution for RCP($k;l$) we need to prove the following technical lemma.
\begin{lemma}\label{technical lemma}
 For $k \geq 2$, $l \geq \frac{2k^{2}}{\ln k}$
\begin{equation} \label{techlemmaeqn}
 \lceil \frac{2k+l}{k}\rceil \times  \binom{k+l}{k} > \binom{2k+l}{k}
\end{equation}

\end{lemma}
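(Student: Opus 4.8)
The plan is to divide both sides of \eqref{techlemmaeqn} by the positive quantity $\binom{k+l}{k}$ and instead establish
\[
\left\lceil\frac{2k+l}{k}\right\rceil>\frac{\binom{2k+l}{k}}{\binom{k+l}{k}},
\]
by bounding the ratio on the right from above and the ceiling on the left from below, and then letting the hypothesis $l\ge 2k^{2}/\ln k$ create the required gap. The first step would be to record the identity
\[
\frac{\binom{2k+l}{k}}{\binom{k+l}{k}}=\prod_{j=1}^{k}\frac{k+l+j}{l+j}=\prod_{j=1}^{k}\left(1+\frac{k}{l+j}\right),
\]
which follows by writing each binomial coefficient as a product of $k$ consecutive integers over $k!$ and cancelling the $k!$'s (for $1\le j\le k$ the $j$-th factor is $k+l+j$ in the numerator and $l+j$ in the denominator).

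Next I would bound the product from above. Since $1+x\le e^{x}$ and $l+j>l$ for every $j\ge 1$,
\[
\prod_{j=1}^{k}\left(1+\frac{k}{l+j}\right)\le\exp\!\left(\sum_{j=1}^{k}\frac{k}{l+j}\right)\le\exp\!\left(\frac{k^{2}}{l}\right)\le\exp\!\left(\frac{\ln k}{2}\right)=\sqrt{k},
\]
the last step using $k^{2}/l\le(\ln k)/2$, which is exactly the hypothesis $l\ge 2k^{2}/\ln k$ (meaningful since $\ln k>0$ for $k\ge 2$). For the left-hand side, $\lceil(2k+l)/k\rceil\ge(2k+l)/k=2+l/k\ge 2+2k/\ln k$. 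It therefore suffices to check the elementary inequality $2k/\ln k>\sqrt{k}$ for $k\ge 2$, i.e.\ $2\sqrt{k}>\ln k$; this holds because $g(k)=2\sqrt{k}-\ln k$ has $g'(k)=k^{-1}(\sqrt{k}-1)\ge 0$ on $[1,\infty)$ and $g(2)=2\sqrt{2}-\ln 2>0$. Chaining these bounds,
\[
\frac{\binom{2k+l}{k}}{\binom{k+l}{k}}\le\sqrt{k}<2+\frac{2k}{\ln k}\le 2+\frac{l}{k}\le\left\lceil\frac{2k+l}{k}\right\rceil,
\]
and multiplying back through by $\binom{k+l}{k}$ yields \eqref{techlemmaeqn}.

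I do not expect any serious obstacle: the estimates are quite loose, the ratio being bounded above by $\sqrt{k}$ while $\lceil(2k+l)/k\rceil$ is already at least $2+2k/\ln k$. The only points needing attention are maintaining strictness throughout the chain (ensured for all $k\ge 2$ by the step $2\sqrt{k}>\ln k$) and observing that $l+j>l$ is what legitimises replacing $\sum_{j}k/(l+j)$ by $k^{2}/l$. One could equally bound $\prod_{j=1}^{k}\left(1+\frac{k}{l+j}\right)$ by $\left(1+\frac{k}{l}\right)^{k}\le e^{k^{2}/l}$ without the termwise argument, but the computation above is the most economical.
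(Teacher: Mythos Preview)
Your proof is correct and follows essentially the same route as the paper: rewrite the ratio $\binom{2k+l}{k}/\binom{k+l}{k}$ as $\prod_{j=1}^{k}(1+k/(l+j))$, apply $1+x\le e^{x}$, use $l\ge 2k^{2}/\ln k$ to cap the product at $\sqrt{k}$, and finish with $2+2k/\ln k>\sqrt{k}$. The only difference is that you bound $\sum_{j}1/(l+j)$ by the trivial $k/l$, whereas the paper detours through an integral bound to $(1+k/l)^{k}$ before applying $1+x\le e^{x}$ a second time; your shortcut is a bit cleaner and loses nothing.
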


\begin{proof}
 Enough to have 
\begin{displaymath}
   \frac{2k+l}{k} \times \binom{k+l}{k} >\binom{2k+l}{k}
\end{displaymath}
\begin{displaymath}
 \mbox{i.e.,} \frac{2k+l}{k} \times \frac{(k+l)!}{k! \times l!} > \frac{(2k+l)!}{k! \times (k+l)!}
\end{displaymath}

\[\frac{2k+l}{k} \prod_{i=1}^{k}(l+i) > \prod_{i=1}^{k}(l+k+i)\]

\[ \mbox{i.e.,} \frac{2k+l}{k} > \prod_{i=1}^{k} \left(1 + \frac{k}{l+i}\right).\]

Since \( \left( 1 + x \right) \leq e ^{x}\)
\[ RHS \leq e ^{k \sum_{i=l+1}^{l+k}\frac{1}{i}}.\] 

Bounding the summation by integral we get,
\[RHS \leq e^{k. \ln\left(\frac{l+k}{l} \right)} = \left( \frac{l+k}{l}\right)^{k}.\]

\[\mbox{Enough to have } \frac{2k+l}{k} > \left( \frac{l+k}{l}\right)^{k}. \]

Let \( l = \frac{2k^{2}}{\ln k}\).

\[\mbox{Enough to have } 2 + \frac{2k}{\ln k} > \left( 1+ \frac{\ln k}{2k}\right)^{k} \]
\[ \mbox{since }  \left( 1 + x \right) \leq e ^{x} \mbox{ we get }  \left( 1+ \frac{\ln k}{2k}\right)^{k} \leq e^{\frac{\ln k}{2}}= \sqrt{k} \]
\[ \therefore \mbox{ }2 + \frac{2k}{\ln k} > \sqrt{k} .\]

 We can see that the above equation holds  for all $k \geq 2$. 
\end{proof}

Now we will prove that for sufficiently large $l$, there exist at least two states in $\bigcup_{T_{A}\in \mathcal{T}_{\alpha}}T_{A}$ which are indistinguishable for player $b$ for any announcement $\alpha$ satisfying Lemma \ref{cond1} .

\begin{lemma} \label{2statesnexist}
For $k \geq 2$, $l>\frac{2k^{2}}{\ln k} $ for any announcement $\alpha$ satisfying Lemma \ref{cond1} $ \exists s_{1},s_{2} \in \bigcup_{T_{A}\in \mathcal{T}_{\alpha}}T_{A}$ such that $s_{1} \neq s_{2}$ and $(s_{1},s_{2}) \in R(b)$ 
\end{lemma}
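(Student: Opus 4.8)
The plan is a pigeonhole (counting) argument. The first step is to use Lemma \ref{cond1} to force $\mathcal{T}_\alpha$ to be large: since $\bigcup_{T_A \in \mathcal{T}_\alpha} A = U$ and each $A$ has exactly $k$ elements while $|U| = 2k+l$, any covering of $U$ by $k$-element sets needs at least $\lceil \frac{2k+l}{k}\rceil$ of them, so $|\mathcal{T}_\alpha| \geq \lceil \frac{2k+l}{k}\rceil$. (The emptiness condition $\bigcap_{T_A\in\mathcal{T}_\alpha}A=\emptyset$ is not needed here.)

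Next I would count the states lying in $\bigcup_{T_A \in \mathcal{T}_\alpha} T_A$. Distinct components $T_A, T_{A'}$ (with $A \neq A'$) are disjoint, and each component has exactly $\binom{k+l}{k}$ states, so $\left|\bigcup_{T_A \in \mathcal{T}_\alpha} T_A\right| = |\mathcal{T}_\alpha|\cdot\binom{k+l}{k} \geq \lceil \frac{2k+l}{k}\rceil\binom{k+l}{k}$. Since $l > \frac{2k^2}{\ln k}$ implies $l \geq \frac{2k^2}{\ln k}$, Lemma \ref{technical lemma} applies and gives $\lceil \frac{2k+l}{k}\rceil\binom{k+l}{k} > \binom{2k+l}{k}$, hence $\left|\bigcup_{T_A \in \mathcal{T}_\alpha} T_A\right| > \binom{2k+l}{k}$.

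Finally, consider the map that sends a state $s = (A,B,C) \in \bigcup_{T_A\in\mathcal{T}_\alpha}T_A$ to its $b$-hand $B$, which is a $k$-element subset of $U$. The number of such subsets is exactly $\binom{2k+l}{k}$, which by the previous step is strictly smaller than the size of the domain, so the map cannot be injective. Therefore there exist $s_1 \neq s_2$ in $\bigcup_{T_A\in\mathcal{T}_\alpha}T_A$ sharing the same $b$-component, and by the definition of $R(b)$ this means $(s_1,s_2)\in R(b)$, which is what we had to prove.

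I expect the only genuine content to be the observation that the covering condition from Lemma \ref{cond1} forces $|\mathcal{T}_\alpha| \geq \lceil \frac{2k+l}{k}\rceil$; once that lower bound is available, the disjointness of the components and Lemma \ref{technical lemma} reduce the rest to a routine pigeonhole. The points to be careful about are that the components are genuinely pairwise disjoint (so that the cardinality of the union is the sum of the cardinalities rather than merely bounded by it) and that the right quantity to compare against is the count $\binom{2k+l}{k}$ of possible hands for Bill.
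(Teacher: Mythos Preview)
Your argument is correct and matches the paper's own proof essentially step for step: both use the covering condition from Lemma~\ref{cond1} to get $|\mathcal{T}_\alpha|\geq\lceil(2k+l)/k\rceil$, multiply by the component size $\binom{k+l}{k}$, invoke Lemma~\ref{technical lemma}, and finish by pigeonhole on Bill's hand. If anything, your write-up is a bit more explicit about the disjointness of the $T_A$ and the map $s\mapsto B$, but the underlying idea is identical.
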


\begin{proof}
 Assume that there do not exist two indistinguishable states for \mbox{ Player $b$} in $\bigcup_{T_{A}\in \mathcal{T}_{\alpha}}T_{A}$. From Lemma \ref{cond1} we will get $|\mathcal{T}_{\alpha}| \geq \lceil \frac{2k+l}{k}\rceil$ since $|A| = k$ and $|U| = 2k+l$ (as the $2k+l$ elements need to be distributed among sets of size $k$). Now the number of different hands possible for Player $b$ should be at least \mbox{$\lceil \frac{2k+l}{k}\rceil \times \binom{k+l}{k}$}. But the different number of combinations possible is $\binom{2k+l}{k}$. By Lemma \ref{technical lemma} $\lceil \frac{2k+l}{k}\rceil \times \binom{k+l}{k}> \binom{2k+l}{k}$ for $l \geq \frac{2k^{2}}{\ln k}$ and $k \geq 2 $. 
\end{proof}

Now we will prove that there does not exist a two-announcement solution for RCP($k;l$) if $k \geq 2$ and $l \geq \frac{2k^{2}}{\ln k}$.
\begin{theorem} 
 For $k \geq 2 $, $l \geq \frac{2k^{2}}{\ln k}$ , there exists no two announcement solution to the RCP($k;l$).
\end{theorem}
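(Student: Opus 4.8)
The plan is to assume, for contradiction, that some scheme solves RCP($k;l$) in two announcements, and to exhibit an initial deal for which one of Anne or Bill fails to learn the deal. First I would put the scheme into a canonical form. Since only Anne and Bill make announcements, the two announcements are issued, in order, by one of the pairs $(a,a),(b,b),(a,b),(b,a)$. If the same player speaks twice, her second announcement is a deterministic function of her own hand (the first announcement already is), so the two announcements together amount to a single announcement by that player; such a scheme cannot solve RCP($k;l$) by Theorem \ref{no1RCP}, whose proof extends to RCP($k;l$) as already noted. Using the symmetry between Anne and Bill, it therefore suffices to rule out the order ``Anne speaks, then Bill speaks.''

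So suppose Anne speaks first, making the announcement $\alpha$ determined by her hand, and let $\mathcal{T}_{\alpha}$ be the set of components $T_{A}$ of $R(a)$ for which Anne announces $\alpha$. Public announcements only shrink Cath's indistinguishability class, so the secrecy required at the end of the scheme already holds after the first announcement; hence $\alpha$ satisfies the hypotheses of Lemma \ref{cond1}, namely $\bigcup_{T_{A}\in\mathcal{T}_{\alpha}}A=U$ and $\bigcap_{T_{A}\in\mathcal{T}_{\alpha}}A=\emptyset$. Now I would apply Lemma \ref{2statesnexist}: since $k\geq2$ and $l\geq\frac{2k^{2}}{\ln k}$, there exist distinct $s_{1},s_{2}\in\bigcup_{T_{A}\in\mathcal{T}_{\alpha}}T_{A}$ with $(s_{1},s_{2})\in R(b)$; write $s_{1}=(A_{1},B,C_{1})$ and $s_{2}=(A_{2},B,C_{2})$. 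These are two deals consistent with the announcement $\alpha$ in which Bill holds the same hand $B$.

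Let $s_{1}$ be the actual deal. After Anne's announcement, Bill cannot tell $s_{1}$ from $s_{2}$: both survive $\alpha$, both carry the same action tag, and $(s_{1},s_{2})\in R(b)$. Bill's second announcement $\beta$ is a deterministic function of Bill's knowledge at that point, hence is the same whether the deal is $s_{1}$ or $s_{2}$; in particular $\beta$ is truthful at both states, so both survive the second announcement with identical action tags. Consequently, in the final model $s_{1}$ and $s_{2}$ are distinct states lying in the same component of $R(b)$, so Bill does not know the deal --- contradicting part~1 of Lemma \ref{claim1} (which holds in any final model) and hence contradicting that the scheme solves RCP($k;l$). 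This settles the order ``Anne, then Bill'' and therefore proves the theorem.

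The step I expect to require the most care is the opening reduction: justifying that two announcements by the same player collapse to one (so Theorem \ref{no1RCP} applies) and that secrecy propagates back to the first announcement (so Lemma \ref{cond1}, and then Lemma \ref{2statesnexist}, may be used). Once the scheme is in the form ``Anne then Bill,'' the combinatorial core is already isolated in Lemma \ref{2statesnexist} --- built on the counting inequality of Lemma \ref{technical lemma} --- and what remains is only to track how $R(b)$-indistinguishability survives the two rounds of model updates, which is routine.
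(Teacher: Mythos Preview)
Your proposal is correct and follows essentially the same route as the paper: invoke Lemma~\ref{2statesnexist} to find two $R(b)$-indistinguishable states surviving Anne's first announcement, then observe that Bill's deterministic second announcement cannot separate them, so Bill fails to learn the deal. You are in fact more careful than the paper, which simply assumes without comment that Anne speaks first and then Bill; your explicit reduction of the $(a,a)$ and $(b,b)$ orderings to Theorem~\ref{no1RCP} and your symmetry argument for $(b,a)$, as well as your justification that secrecy after two announcements implies secrecy after one (so that Lemma~\ref{cond1} applies), fill gaps the paper leaves implicit.
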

\begin{proof}
 The initial Kripke model $M_{1}= \langle W_{1},R_{1},V_{1}\rangle$ as described above. Let Player $a$ make the first announcement $\alpha$. The action model for the first announcement is $\mu_{1}= \langle A_{1},R^{*}_{1},\Pi_{1}\rangle$. $\alpha \in A_{1}$.  Let $s_{1},s_{2} \in W_{1}$ be the two states which are indistinguishable for Player $b$ in $\bigcup_{T_{A}\in \mathcal{T}_{\alpha}}A$. $(s_{1},s_{2}) \in R_{1}(b)$. Let the resulting Kripke model be $M_{2}=\langle W_{2},R_{2},V_{2}\rangle$. The states $(s_{1},\alpha), (s_{2},\alpha) \in W_{2}$ will be indistinguishable for player $b$ since $(s_{1},s_{2}) \in R_{1}(b)$ and $(\alpha,\alpha) \in R^{*}_{1}(b)$. Thus $((s_{1},\alpha), (s_{2},\alpha)) \in R_{2}(b)$. Hence Player $b$ will make the same announcement say $\beta$  for both these states. Let $\mu_{2}= \langle A_{2},R^{*}_{2},\Pi_{2}\rangle$ be the action model for the second announcement with  $\beta \in A_{2}$. Let the resulting Kripke model be $M_{3}=\langle W_{3},R_{3},V_{3}\rangle$. Now the states $(s_{1},\alpha,\beta),(s_{2},\alpha,\beta) \in W_{3}$ will still be indistinguishable for Player $b$ since $((s_{1},\alpha), (s_{2},\alpha)) \in R_{2}(b)$ and $(\beta,\beta) \in R^{*}_{2}(b)$. i.e., $((s_{1},\alpha,\beta), (s_{2},\alpha,\beta)) \in R_{3}(b)$. Hence two announcements are not sufficient to solve the RCP($k;l$) if $l \geq \frac{2k^{2}}{\ln k}$. 
\end{proof}

\section{Conclusion}

We have analyzed the  Russian Cards Problem and the generalization RCP($k;l$) in the framework of Dynamic Epistemic Logic. It is shown  that there does not exist a single announcement solution for the Russian Cards Problem within  the framework of Dynamic Epistemic Logic. Since the framework is considered sufficiently general \cite{dit4,baltag,benthem1lonely,lutz,dit5} we claim that there can be no one-announcement solution to RCP in general and no two announcement solution to  RCP($k;l$) for $l \geq \frac{2k^{2}}{\ln k}$, $k \geq 2$. 

The problem of deriving upper and lower bounds for  RCP($k;l$) in general in terms of $k$ and $l$ remains open for further investigation.

\subsection*{Acknowledgements}
The authors would like to thank Dr. L. Sunil Chandran for helpful
discussions and suggestions. The first author would like to thank Prof. R. Ramanujam for introducing her to the Russian cards problem.

\renewcommand{\baselinestretch}{1.0}
        \begin{small}
         \bibliographystyle{./IEEEtran}% REFERENCES
	 \bibliography{./myreference}
        \end{small}

\end{document}